\numberwithin{equation}{section}
\theoremstyle{plain}
\newtheorem{theorem}{Theorem}[section]
\newtheorem{lemma}[theorem]{Lemma}
\newtheorem{definition}[theorem]{Definition}
\newtheorem{remark}[theorem]{Remark}
\newcommand{\Co}{{\mathbb C}}         
\newcommand{\eps}{\epsilon}
\newcommand{\Mcal}{\mathcal M}
\newcommand{\sDiv}{\mathscr{D}}
\newcommand{\sCurl}{\mathscr{C}}
\newcommand{\sCurlDagger}{\mathscr{C}^\dagger}
\newcommand{\sTwist}{\mathscr{T}}
\newcommand{\SL}{\mathrm{SL}}
\newcommand{\met}{g}
\newcommand{\GenVec}{\nu} 
\title[A new tensorial conservation law for Maxwell fields]{A new tensorial conservation law for Maxwell fields on the Kerr background}
\author[L. Andersson]{Lars Andersson} \email{laan@aei.mpg.de}
\address{Albert Einstein Institute, Am M\"uhlenberg 1, D-14476 Potsdam,
  Germany 
\and
Department of Mathematics, Royal Institute of Technology, SE-100 44 Stockholm, Sweden
}
\author[T. B\"ackdahl]{Thomas B\"ackdahl} \email{t.backdahl@ed.ac.uk}
\address{The School of Mathematics, University of Edinburgh, James Clerk Maxwell Building, 
Peter Guthrie Tait Road, Edinburgh
EH9 3FD, UK}
\author[P. Blue]{Pieter Blue} \email{P.Blue@ed.ac.uk}
\address{The School of Mathematics and the Maxwell Institute, University of Edinburgh, James Clerk Maxwell Building, 
Peter Guthrie Tait Road, Edinburgh
EH9 3FD,UK}
\begin{document} 

 \begin{abstract}
A new, conserved, symmetric tensor field for a source-free Maxwell test field on a four-dimensional spacetime with a conformal Killing-Yano tensor, satisfying a certain compatibility condition, is introduced.  In particular, this construction works for the Kerr spacetime.
\end{abstract}

\date{December 9, 2014}

\maketitle

\section{Introduction} 
\label{sec:intro}
In this paper, we consider the Maxwell equation for a real $2$-form
$F_{ab}=F_{[ab]}$,
\begin{align}\label{eq:MaxwellTensor}
\nabla^a F_{ab} &= 0, \quad \nabla^a {*} F_{ab} = 0,
 \end{align}
on a four-dimensional Lorentzian manifold $(\Mcal,\met_{ab})$. Recall
that a
conformal Killing-Yano tensor is real a $2$-form $Y_{ab} = Y_{[ab]}$
satisfying
\begin{align}
\label{eq:CKYdefTensorVersion}
\nabla_{(a}Y_{b)c}={}&- \tfrac{1}{3} g_{ab} \nabla_{d}Y_{c}{}^{d}
 + \tfrac{1}{3} g_{(a|c|}\nabla^{d}Y_{b)d}.
\end{align}
Associated with $Y_{ab}$ is the complex $1$-form 
\begin{align}
\xi_{a}=
{}&
\tfrac{1}{3}i \nabla_{b}Y_{a}{}^{b}
 -  \tfrac{1}{3} \nabla_{b}{*}Y_{a}{}^{b}.
\label{eq:xidefTensorVersion}
\end{align}
We say that $Y_{ab}$ satisfies the aligned matter condition if the
Ricci curvature and $Y_{ab}$ satisfy
\begin{align}
R_{(a}{}^{c}Y_{b)c}={}&0, \quad
R_{(a}{}^{c}{*}Y_{b)c}=
0. 
\label{eq:alignedmatterTensorVersion}
\end{align}

\begin{theorem}
\label{thm:MainResultTensorVersion}
Let $Y_{ab}$ and $F_{ab}$ be real $2$-forms. 
Define the real $2$-form $Z_{ab}$ and the complex
$1$-form $\eta_a$ by
\begin{align}
Z_{ab}={}&- \tfrac{4}{3} ({*}F)_{[a}{}^{c}Y_{b]c},\label{eq:thetadefTensorVersion}\\
\eta_{a}={}&- \tfrac{1}{2} \nabla_{b}Z_{a}{}^{b}
 -  \tfrac{1}{2}i \nabla_{b}{*}Z_{a}{}^{b},
\label{eq:etadefTensorVersion}
\end{align}
and the
real symmetric $2$-tensor $V_{ab}$ by 
\begin{align}
V_{ab}={}&\eta_{(a}\bar{\eta}_{b)}
- \tfrac{1}{2} g_{ab} \eta^{c} \bar{\eta}_{c}
-  \tfrac{1}{3} (\mathcal{L}_{Re\xi}F)_{(a}{}^{c}Z_{b)c}
 + \tfrac{1}{12} g_{ab} (\mathcal{L}_{Re\xi}F)^{cd} Z_{cd}\nonumber\\
& + \tfrac{1}{3} (\mathcal{L}_{Im\xi}{*}F)_{(a}{}^{c}Z_{b)c}
 -  \tfrac{1}{12} g_{ab} (\mathcal{L}_{Im\xi}{*}F)^{cd}
 Z_{cd},\label{eq:VdefTensorVersion}
\end{align}
where $\xi_a$ is given by equation
\eqref{eq:xidefTensorVersion} and $\bar{\eta}_a$ denotes the complex
conjugate of $\eta_a$.

If $Y_{ab}$ is a conformal Killing-Yano tensor satisfying the aligned matter condition
\eqref{eq:alignedmatterTensorVersion} and $F_{ab}$ satisfies the
Maxwell equations \eqref{eq:MaxwellTensor}, 
then $V_{ab}$ has vanishing divergence, $\nabla^a V_{ab} = 0$. 
\end{theorem}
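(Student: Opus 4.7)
The natural framework is the $2$-spinor formalism, so I would begin by translating the data. The CKY tensor $Y_{ab}$ corresponds to a symmetric Killing spinor $\kappa_{AB}$ satisfying $\nabla_{A'(A}\kappa_{BC)}=0$; the Maxwell field to $\phi_{AB}$ with $\nabla^{AA'}\phi_{AB}=0$; and the complex $1$-form $\xi_a$ of \eqref{eq:xidefTensorVersion} to $\xi_{AA'}\propto\nabla^{B}{}_{A'}\kappa_{AB}$. A direct computation then shows that $Z_{ab}$ is encoded, up to a constant, by the symmetric spinor $\zeta_{AB}:=2\kappa_{(A}{}^{C}\phi_{B)C}$, and that $\eta_a$ is precisely the self-dual part of $\nabla^{b}Z_{ab}$, so that $\eta_{AA'} = -\nabla^{B}{}_{A'}\zeta_{AB}$ up to an overall constant.

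I would then compute $\nabla^{BA'}\zeta_{AB}$ using the Leibniz rule, eliminating derivatives of $\kappa$ via the Killing spinor equation and derivatives of $\phi$ via Maxwell. The aligned matter condition \eqref{eq:alignedmatterTensorVersion}, which in spinor form is a simple algebraic constraint on the contraction of $\kappa_{AB}$ with the Ricci spinor $\Phi_{ABA'B'}$, ensures that the curvature couplings which appear when covariant derivatives are commuted drop out. The same input implies that $\Real\xi$ and $\Imag\xi$ are genuine real Killing vectors on $(\Mcal,\met)$; in particular, $\mathcal{L}_{\Real\xi}F$ and $\mathcal{L}_{\Imag\xi}{*}F$ are themselves solutions of Maxwell's equations, and $\mathcal{L}_\xi$ commutes with $\nabla$ when acting on tensor fields.

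With these identities in hand, I would expand $\nabla^aV_{ab}$ directly from \eqref{eq:VdefTensorVersion}. The divergence of $\eta_{(a}\bar\eta_{b)}-\tfrac{1}{2}g_{ab}\eta^c\bar\eta_c$ is of the schematic form $\Real[\bar\eta^a\nabla_{[a}\eta_{b]}] + \Real[\bar\eta_b\,\nabla^a\eta_a]$. The scalar piece $\nabla^a\eta_a$ is computed by commuting derivatives in $\eta_{AA'}=-\nabla^{B}{}_{A'}\zeta_{AB}$ and yields only curvature terms that vanish by aligned matter. The curl piece $\nabla_{[a}\eta_{b]}$, together with the divergences of the Lie-derivative cross-terms, is where the real work lies: one rewrites $\nabla_{[a}\eta_{b]}$ via the identity for $\zeta$ and uses the Cartan relation $\mathcal{L}_\xi F = d(\iota_\xi F)$ (since $dF=0$) to re-express the result in terms of $\mathcal{L}_{\Real\xi}F$, $\mathcal{L}_{\Imag\xi}{*}F$, and $Z$.

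The main obstacle is the final pairwise cancellation. The ``anomalous'' pieces of $\nabla^a(\eta_{(a}\bar\eta_{b)})$ — those that would vanish if $\eta_a$ were the gradient of a wave-equation solution — must match precisely minus the divergence of the two Lie-derivative cross-terms in \eqref{eq:VdefTensorVersion}. This is where the numerical coefficients $\tfrac{1}{3}$ and $\tfrac{1}{12}$ are forced, and where the algebra is most delicate. In spinor form I expect it to reduce to a single bilinear identity involving $\zeta_{AB}$, $\bar\zeta_{A'B'}$, and $\mathcal{L}_\xi\phi_{AB}$, modulo curvature terms killed by aligned matter.
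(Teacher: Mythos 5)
Your outline follows the same route as the paper: pass to the $2$-spinor formalism, encode $Z_{ab}$ as $\Theta_{AB}=-2\kappa_{(A}{}^{C}\phi_{B)C}$ and $\eta_{AA'}$ as its curl-dagger, and show that the divergence of the superenergy part of $V_{ab}$ cancels against the divergence of the Lie-derivative cross-terms. You have also correctly located where the real work lies. But the proposal stops exactly there: the decisive identities are announced as what ``must'' or is ``expected to'' happen rather than established. Concretely, the proof requires $(\sDiv_{1,1}\eta)=0$, $(\sCurlDagger_{1,1}\eta)_{A'B'}=0$, and above all $(\sCurl_{1,1}\eta)_{AB}=\tfrac{2}{3}(\hat{\mathcal{L}}_{\xi}\phi)_{AB}$; it is the factor $\tfrac23$ in the last identity that fixes the coefficients $\tfrac13$ and $\tfrac1{12}$ in \eqref{eq:VdefTensorVersion}, and none of these are derived in your text. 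Without them the ``single bilinear identity'' you appeal to is not available, so the argument does not close.

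Moreover, the mechanism you invoke for the curvature terms is not quite right. The aligned matter condition \eqref{eq:alignedmatter} disposes only of the Ricci-spinor couplings (it is what makes $(\sCurlDagger_{1,1}\eta)_{A'B'}=-2\Phi_{BCA'B'}\kappa^{AB}\phi_{A}{}^{C}$ vanish, and what makes $\xi$ Killing). The Weyl-spinor terms $\Psi_{ABCD}$ produced by commuting derivatives in $(\sCurl_{1,1}\sCurlDagger_{2,0}\Theta)_{AB}$ do \emph{not} drop out by aligned matter; they must be reorganized into $(\sCurl_{1,1}\xi)_{AB}$, and hence into $\hat{\mathcal{L}}_{\xi}\phi$, using the integrability condition $\Psi_{(ABC}{}^{F}\kappa_{D)F}=0$ of the Killing spinor equation together with the resulting decomposition $\Psi_{ABCF}\kappa_{D}{}^{F}=3\Lambda\epsilon_{(A|D|}\kappa_{BC)}+\tfrac12\epsilon_{(A|D|}(\sCurl_{1,1}\xi)_{BC)}$. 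This step, which is the heart of the paper's Lemma \ref{lem:etaeqs}, is absent from your argument, and ``curvature couplings drop out by aligned matter'' would not survive an attempt to carry the computation through. (A minor further point: $\nabla^{a}\eta_{a}=0$ is an identity, $\sDiv_{1,1}\sCurlDagger_{2,0}=0$, and needs no curvature input at all.)
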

\begin{remark}   
\begin{enumerate} 
\item The vector field $\xi^a$ is Killing, $\nabla_{(a} \xi_{b)} = 0$,  if the aligned matter condition \eqref{eq:alignedmatterTensorVersion} holds, cf. equation \eqref{eq:Twistxi} below. If $\nabla^a Y_{ab} = 0$ then $Y_{ab}$ is a Killing-Yano tensor, in which case $\xi_a$ is real, and the last two terms of \eqref{eq:VdefTensorVersion} vanish.
\item  The Kerr family of stationary, rotating vacuum black hole metrics admit a Killing-Yano tensor. More generally, the Kerr-Newman family of stationary, rotating electro-vacuum black hole metrics admit a Killing-Yano tensor satisfying the aligned matter condition. 
See section \ref{sec:kerr-remarks}  for further discussion. 
\end{enumerate}
\end{remark} 

Let 
\begin{align*}
T_{ab}&= -F_{a}{}^{c}F_{bc} +\tfrac14\met_{ab}F_{cd}F^{cd}
\end{align*}
be the symmetric energy-momentum tensor for the Maxwell field. It is traceless and satisfies the dominant energy condition, i.e. $T_{ab} \mu^a\GenVec^b\geq 0$ for any future causal vectors $\mu^a$, $\GenVec^b$. Further, if $F_{ab}$ satisfies the Maxwell equations, $T_{ab}$ is conserved, $\nabla^a T_{ab} = 0$. 
Hence, the current 
\begin{equation}\label{eq:stresscurr}
J_a = T_{ab} \GenVec^b
\end{equation}  
is conserved, $\nabla^a J_a = 0$, if $\GenVec^a$ is a conformal Killing field, $\nabla_{(a} \GenVec_{b)} - \frac{1}{4} \nabla_c \GenVec^c \met_{ab} = 0$. 

For the Maxwell field on Minkowski space, and more generally on spacetimes admitting conformal Killing-Yano tensors satisfying the aligned matter condition, there are non-classical conserved currents not equivalent\footnote{A conserved current $J_a$ is a 1-form concomitant of the Maxwell field, satisfying $\nabla^a J_a = 0$. We say that $J_a$ is equivalent to $\tilde{J}_a$ if $J_a - \tilde J_a = \nabla^b C_{ab}$ for some 2-form $C_{ab} = C_{[ab]}$.} 
to any of the classical conserved energy-momentum currents of the form \eqref{eq:stresscurr}, see \cite{anco:pohjanpelto:2003:ProcRSoc:MR1997098} and references therein. 
For the Maxwell field on Minkowski space, 
these include  chiral currents constructed using the 20-dimensional family of conformal Killing-Yano tensors of Minkowski space. As shown by the authors \cite{ABB:currents}, analogous conserved currents exist also on spacetimes with conformal Killing-Yano tensors satisfying the aligned matter condition. 

In spite of the large literature on conformal Killing-Yano tensors, and the related conservation laws, the tensorial conservation law exhibited in Theorem \ref{thm:MainResultTensorVersion} appears to be new, even in the Minkowski case. 
The fact that the new  higher order tensor concomitant $V_{ab}$ is conserved also in the case of the Kerr and Kerr-Newman spacetimes makes it interesting from the point of view of the black hole stability problem, which in fact served as an important motivation for the investigation which led to its discovery. See section \ref{sec:kerr-remarks} below for further remarks.  

At this point, we should mention that the symmetric tensor  
$$
B_{ab} =\nabla_{d}F_{bc} \nabla^{d}F_{a}{}^{c} -  \tfrac{1}{4} g_{ab} \nabla_{f}F_{cd} \nabla^{f}F^{cd}
$$
which arises as a trace of the 4-index Chevreton tensor, was shown by Bergqvist et
al. \cite{bergqvist:etal:2003CQGra..20.2663B} to be traceless and conserved for a Maxwell field on a Ricci flat spacetime. Like the conserved tensor $V_{ab}$
introduced in this paper, the tensor $B_{ab}$ introduced by Bergqvist et al. depends on the
$F_{ab}$ and its first derivatives. However, while $B_{ab}$ is traceless and fails to satisfy any positivity
condition, the new tensor $V_{ab}$ has trace $V^{a}{}_{a} = - \eta^{a} \bar{\eta}_{a}$ 
and satisfies a weak form of the dominant energy condition in the sense the its leading order term,  $\eta_{(a}\bar{\eta}_{b)}- \tfrac{1}{2} g_{ab} \eta^{c} \bar{\eta}_{c}$ which is quadratic in first derivatives of $F_{ab}$, is a superenergy tensor for $\eta_a$ and hence does satisfy the dominant energy condition. Hence, energies can be constructed in terms of $V_{ab}$ which are non-negative up to terms of lower order.

The proof of Theorem \ref{thm:MainResultTensorVersion}, which will be given in the next section, makes use of computations in the 2-spinor formalism. In the investigations leading to the main result, the \emph{SymManipulator} package \cite{Bae11a}, developed by one of the authors (T.B.) for the Mathematica based symbolic differential geometry suite \emph{xAct} \cite{xAct}, has played an essential role. SymManipulator makes it possible to systematically exploit decompositions in terms of irreducible representations of the spin group $\SL(2,\Co)$, and allows one to carry out investigations that are not feasible by hand. 

In section \ref{sec:kerr-remarks}, we show how the main result applies for the Kerr-Newman family of electro-vacuum spacetimes, and indicate its relation to the Teukolsky and Teukolsky-Starobinsky equations.

\section{Proof of theorem \ref{thm:MainResultTensorVersion}}
\label{sec:VInSpinors}
For the remainder of this paper, we will make use of the 2-spinor
formalism, following the conventions of \cite{Penrose:1986fk}. 
Since our considerations are local, we can assume without loss of generality that $(\Mcal, \met_{ab})$ is oriented and globally hyperbolic. This also implies that $\Mcal$ is spin.

The spin group is $\SL(2,\Co)$ which has the inequivalent spinor representations $\Co^2$ and $\bar{\Co}^2$.  Unprimed upper case latin indices and their primed versions are used for sections of the corresponding spinor bundles, respectively. The correspondence between spinors and tensors makes it possible to translate all tensor expressions to spinor form. The action of $\SL(2,\Co)$ on $\Co^2$ leaves invariant the spin metric $\eps_{AB} = \eps_{[AB]}$, which is used to raise and lower indices on tensors.  The metric $\met_{ab}$ is related to $\eps_{AB}$ by $\met_{ab} = \eps_{AB} \bar\eps_{A'B'}$.  
Let $\mathcal{S}_{k,l}$ denote the
space of symmetric spinors with $k$ unprimed indices and $l$
primed indices.

There are symmetric spinors $\kappa_{AB}$, $\phi_{AB}$, and $\Theta_{AB}$ such that  
\begin{align*}
Y_{ab}={}&\tfrac{3}{2}i (\bar\epsilon_{A'B'} \kappa_{AB}
 -  \epsilon_{AB} \bar{\kappa}_{A'B'}) ,\\
F_{ab}={}& \bar\epsilon_{A'B'} \phi_{AB}
 + \epsilon_{AB} \bar{\phi}_{A'B'},\\
Z_{ab}={}&\bar\epsilon_{A'B'} \Theta_{AB}
 + \epsilon_{AB} \overline{\Theta}_{A'B'} .
\end{align*}
The normalization of $Y_{ab}$ is chosen for convenience.
Equations \eqref{eq:MaxwellTensor}-\eqref{eq:VdefTensorVersion}
become respectively
\begin{align}
\nabla^{A}{}_{A'}\phi_{AB} &= 0,
\label{eq:Maxwell}\\
\nabla_{(A|A'|}\kappa_{BC)}&=0, 
\label{eq:KillingSpinordef}\\
\xi_{AA'} &=\nabla^{B}{}_{A'}\kappa_{AB},
\label{eq:xidef}\\
\Phi_{(A}{}^{C}{}_{|A'B'|}\kappa_{B)C}&=0.
\label{eq:alignedmatter}\\
\Theta_{AB}&=-2 \kappa_{(A}{}^{C}\phi_{B)C},
\label{eq:thetadef}\\
\eta_{AA'}&=\nabla^{B}{}_{A'}\Theta_{AB}. 
\label{eq:etadef}\\
\intertext{and}
V_{ABA'B'}=\tfrac{1}{2} \eta_{AB'} \bar{\eta}_{A'B}
 + \tfrac{1}{2} \eta_{BA'} \bar{\eta}_{B'A}
 &+ \tfrac{1}{3} \Theta_{AB} (\hat{\mathcal{L}}_{\bar\xi}\bar{\phi})_{A'B'}
 + \tfrac{1}{3} \bar\Theta_{A'B'}
 (\hat{\mathcal{L}}_{\xi}\phi)_{AB}, 
\label{eq:Vdef}
\end{align}
where $\hat{\mathcal{L}}_{\xi}$ is a conformally weighted Lie derivative on spinors, see equation \eqref{eq:Liespin-def} below. 

The projection of the spinor covariant derivative $\nabla_{AA'}$ on symmetric spinors (which form the irreducible representations of the spin group $\SL(2,\Co)$) gives the following fundamental operators.
\begin{definition}[\protect{\cite[Definition 13]{ABB:symop:2014CQGra..31m5015A}}]
Let the differential operators  
   $ \sDiv_{k,l}:\mathcal{S}_{k,l}\rightarrow
\mathcal{S}_{k-1,l-1}$, $\sCurl_{k,l}:\mathcal{S}_{k,l}\rightarrow
\mathcal{S}_{k+1,l-1}$,
$\sCurlDagger_{k,l}:\mathcal{S}_{k,l}\rightarrow
\mathcal{S}_{k-1,l+1}$, and
$\sTwist_{k,l}:\mathcal{S}_{k,l}\rightarrow \mathcal{S}_{k+1,l+1}$
be defined by
\begin{align*}
(\sDiv_{k,l}\varphi)_{A_1\dots A_{k-1}}{}^{A_1'\dots A_{l-1}'}={}&
\nabla^{BB'}\varphi_{A_1\dots A_{k-1}B}{}^{A_1'\dots A_{l-1}'}{}_{B'},\\
(\sCurl_{k,l}\varphi)_{A_1\dots A_{k+1}}{}^{A_1'\dots A_{l-1}'}={}&
\nabla_{(A_1}{}^{B'}\varphi_{A_2\dots A_{k+1})}{}^{A_1'\dots A_{l-1}'}{}_{B'},\\
(\sCurlDagger_{k,l}\varphi)_{A_1\dots A_{k-1}}{}^{A_1'\dots A_{l+1}'}={}&
\nabla^{B(A_1'}\varphi_{A_1\dots A_{k-1}B}{}^{A_2'\dots A_{l+1}')},\\
(\sTwist_{k,l}\varphi)_{A_1\dots A_{k+1}}{}^{A_1'\dots A_{l+1}'}={}&
\nabla_{(A_1}{}^{(A_1'}\varphi_{A_2\dots A_{k+1})}{}^{A_2'\dots A_{l+1}')}.
\end{align*}
The operators are called respectively the divergence, curl, curl-dagger, and twistor operators. 
\end{definition}
With respect to complex conjugation, the operators $\sDiv, \sTwist$ satisfy $\overline{\sDiv_{k,l}} = \sDiv_{l,k}$, $\overline{\sTwist_{k,l}} = \sTwist_{l,k}$, while $\overline{\sCurl_{k,l}} = \sCurlDagger_{l,k}$, $\overline{\sCurlDagger_{k,l}} = \sCurl_{l,k}$. In the following, we shall use the fundamental operators and their properties freely.
Any covariant expression in spinors and their covariant derivatives can be written in terms of the fundamental operators using the following Lemma. 
\begin{lemma}[\protect{\cite[Lemma 15]{ABB:symop:2014CQGra..31m5015A}}]
For any $\varphi_{A_1\dots A_k}{}^{A_{1}'\dots A_{l}'}\in \mathcal{S}_{k,l}$, we have the irreducible decomposition
\begin{align*}
\nabla_{A_1}{}^{A_1'}\varphi{}_{A_2\dots A_{k+1}}{}^{A_2'\dots A_{l+1}'}={}&
(\sTwist_{k,l}\varphi){}_{A_1\dots A_{k+1}}{}^{A_1'\dots A_{l+1}'}\nonumber\\
&-\tfrac{l}{l+1}\bar\epsilon^{A_1'(A_2'}(\sCurl_{k,l}\varphi){}_{A_1\dots A_{k+1}}{}^{A_3'\dots A_{l+1}')}\nonumber\\
&-\tfrac{k}{k+1}\epsilon_{A_1(A_2}(\sCurlDagger_{k,l}\varphi){}_{A_3\dots A_{k+1})}{}^{A_1'\dots A_{l+1}'}\nonumber\\
&+\tfrac{kl}{(k+1)(l+1)}\epsilon_{A_1(A_2}\bar\epsilon^{A_1'(A_2'}(\sDiv_{k,l}\varphi){}_{A_3\dots A_{k+1})}{}^{A_3'\dots A_{l+1}')}.
\end{align*}
\end{lemma}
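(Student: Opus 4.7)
The identity is a purely algebraic statement about the irreducible decomposition of the spinor
$\psi_{A_1\ldots A_{k+1}}{}^{A_1'\ldots A_{l+1}'}:=\nabla_{A_1}{}^{A_1'}\varphi_{A_2\ldots A_{k+1}}{}^{A_2'\ldots A_{l+1}'}$
into $\SL(2,\Co)$ representations. Since $\varphi\in\mathcal{S}_{k,l}$, the spinor $\psi$ inherits symmetry in the blocks $(A_2,\ldots,A_{k+1})$ and $(A_2',\ldots,A_{l+1}')$, but carries no symmetry linking these blocks to the free indices $A_1$, $A_1'$. The plan is to decompose $\psi$ first on its unprimed indices and then on its primed indices, using the elementary two-dimensional identity for mixed-symmetry spinors.

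The building block is the following: for any $T_{A_1\ldots A_{k+1}}$ symmetric in $A_2,\ldots,A_{k+1}$, one has
$T_{A_1\ldots A_{k+1}} = T_{(A_1\ldots A_{k+1})} - \tfrac{k}{k+1}\,\epsilon_{A_1(A_2}\,T^{B}{}_{|B|A_3\ldots A_{k+1})}$.
The ansatz is forced by two-dimensionality of spin space: any piece orthogonal to the fully symmetric one must be a combination of $\epsilon_{A_1 A_j}$ contractions, and the residual symmetry in $A_2,\ldots,A_{k+1}$ collapses these to a single $\epsilon$-trace term. The coefficient $-k/(k+1)$ is fixed by contracting both sides with $\epsilon^{A_1A_2}$: the symmetric piece vanishes, and matching the trace on the left determines the constant. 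The analogue on primed indices is the complex conjugate identity, with coefficient $-l/(l+1)$.

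Applying the unprimed decomposition to $\psi$ and then the primed decomposition to each resulting piece produces four terms. The doubly symmetric piece is, by definition of the twistor operator, $(\sTwist_{k,l}\varphi)_{A_1\ldots A_{k+1}}{}^{A_1'\ldots A_{l+1}'}$. The piece symmetric on unprimed and traced on primed uses the contraction $\nabla_{(A_1}{}^{B'}\varphi_{A_2\ldots A_{k+1})}{}^{A_1'\ldots}{}_{B'}=(\sCurl_{k,l}\varphi)$, carrying the factor $-l/(l+1)$. Similarly, the unprimed-trace-only piece is $(\sCurlDagger_{k,l}\varphi)$ with coefficient $-k/(k+1)$, and the doubly-traced piece is $\nabla^{BB'}\varphi_{\ldots B}{}^{\ldots}{}_{B'}=(\sDiv_{k,l}\varphi)$, whose coefficient is the product $kl/[(k+1)(l+1)]$. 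These match the four terms of the stated formula, including the signs $(+,-,-,+)$.

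The main technical obstacle is the derivation of the coefficient $-k/(k+1)$ in the building block, which amounts to careful bookkeeping of symmetrization brackets in the identity $\epsilon^{A_1A_2}\,\epsilon_{A_1(A_2}X_{A_3\ldots A_{k+1})}$ expanded against the residual symmetry of $\psi$; once this is secured, the two-step decomposition is routine, and the identification of each piece with the corresponding fundamental operator $\sTwist$, $\sCurl$, $\sCurlDagger$, $\sDiv$ completes the proof.
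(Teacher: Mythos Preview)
The paper does not supply its own proof of this lemma; it is quoted as \cite[Lemma~15]{ABB:symop:2014CQGra..31m5015A} and used as a standing tool. Your argument is a correct and standard derivation: the building-block identity
\[
T_{A_1\ldots A_{k+1}}=T_{(A_1\ldots A_{k+1})}-\tfrac{k}{k+1}\,\epsilon_{A_1(A_2}T^{B}{}_{|B|A_3\ldots A_{k+1})}
\]
for $T$ symmetric in its last $k$ indices is precisely the way one splits off the single $\epsilon$-trace in two-component spin space, and iterating it independently on the unprimed and primed index blocks of $\nabla_{A_1}{}^{A_1'}\varphi$ yields the four pieces with coefficients $1$, $-l/(l+1)$, $-k/(k+1)$, $kl/[(k+1)(l+1)]$, which match the definitions of $\sTwist_{k,l}$, $\sCurl_{k,l}$, $\sCurlDagger_{k,l}$, $\sDiv_{k,l}$ respectively.
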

For example, the Maxwell equation and the Killing spinor equations take the form 
$$
(\sCurlDagger_{2,0} \phi)_{AA'} = 0, 
$$
and 
\begin{equation*}
(\sTwist_{2,0} \kappa)_{ABCA'} = 0 
\end{equation*} 
respectively, in terms of the fundamental operators. 

In the computations below we shall need some commutator relations satisfied by the fundamental operators, see \cite[Lemma 18]{ABB:symop:2014CQGra..31m5015A}. The following lemma gives the commutators which are relevant here. 

\begin{lemma}
\label{lemma:commutators}
Let $\varphi_{AB} \in \mathcal{S}_{2,0}$. The operators $\sDiv$, $\sCurl$, $\sCurlDagger$ and $\sTwist$ satisfies the following commutator relations
\begin{subequations}
\begin{align}
(\sDiv_{1,1} \sCurlDagger_{2,0} \varphi)={}&0,\label{eq:DivCurlDagger}\\
(\sCurl_{3,1} \sTwist_{2,0} \varphi)_{ABCD}={}&2 \Psi_{(ABC}{}^{F}\varphi_{D)F},\label{eq:CurlTwist}\\
(\sCurlDagger_{3,1} \sTwist_{2,0} \varphi)_{ABA'B'}={}& 
\tfrac{2}{3} (\sTwist_{1,1} \sCurlDagger_{2,0} \varphi)_{ABA'B'}
+ 2 \Phi_{(A}{}^{C}{}_{|A'B'|}\varphi_{B)C},\label{eq:CurlDaggerTwist}\\
(\sDiv_{3,1} \sTwist_{2,0} \varphi)_{AB}={}&
-  \tfrac{4}{3} (\sCurl_{1,1} \sCurlDagger_{2,0} \varphi)_{AB}
-8 \Lambda \varphi_{AB}
 + 2 \Psi_{ABCD} \varphi^{CD} .\label{eq:DivTwistCurlCurlDagger}
\end{align}
\end{subequations}
\end{lemma}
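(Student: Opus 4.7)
The plan is to prove each of the four commutator identities by a direct second-order computation: apply the definitions of $\sDiv$, $\sCurl$, $\sCurlDagger$ and $\sTwist$ to rewrite the composition as an iterated covariant derivative, use the irreducible decomposition of $\nabla_{BB'}\varphi_{CD}$ from the lemma above to split it into four irreducible pieces, and then invoke the spinor Ricci identities to evaluate the antisymmetric parts of the resulting $\nabla\nabla$. In practice, any $\nabla_{AA'}\nabla_{BB'}$ can be decomposed into a totally symmetric piece, the two ``box'' operators $\Box_{AB}=\nabla^{C'}{}_{(A}\nabla_{B)C'}$ and $\Box_{A'B'}=\nabla^{C}{}_{(A'}\nabla_{|C|B')}$, and a wave-operator trace; the first produces the ``operator'' terms on the right-hand sides of \eqref{eq:DivCurlDagger}--\eqref{eq:DivTwistCurlCurlDagger}, while the box contributions encode the curvature via the standard Ricci identities, which for a symmetric spinor $\varphi_{AB}\in\mathcal{S}_{2,0}$ take the schematic form $\Box_{CD}\varphi_{AB}\sim \Psi_{ABC}{}^{E}\varphi_{ED}+\Lambda\,(\text{trace terms})$ and $\Box_{C'D'}\varphi_{AB}\sim \Phi_{A}{}^{E}{}_{C'D'}\varphi_{BE}$.

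For \eqref{eq:DivCurlDagger}, the full contraction in $(AB)$ together with the symmetry of $\varphi_{AB}$ forces the Weyl term $\Psi_{ABCD}\varphi^{CD}$ and the $\Lambda$ contribution to cancel identically, giving zero on the nose. For \eqref{eq:CurlTwist}, the total symmetrization in the four free unprimed indices kills everything except the $\Psi_{(ABC}{}^{E}\varphi_{D)E}$ term. For \eqref{eq:CurlDaggerTwist}, the mixed symmetrization over $(AB)(A'B')$ isolates the $\Phi$-curvature term, while the curvature-free remainder reassembles into the $\sTwist_{1,1}\sCurlDagger_{2,0}\varphi$ piece with coefficient $\tfrac{2}{3}$. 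For \eqref{eq:DivTwistCurlCurlDagger}, the trace over both primed and unprimed indices brings in $\Lambda$, $\Psi$, and a residual $\sCurl_{1,1}\sCurlDagger_{2,0}\varphi$ term.

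The principal obstacle is bookkeeping: the irreducible decomposition introduces fractional weights $k/(k{+}1)$, $l/(l{+}1)$ and $kl/((k{+}1)(l{+}1))$, and each Ricci identity carries its own numerical factors. Reconciling these across four separate identities by hand is error-prone, which is precisely why the authors rely on the \emph{SymManipulator} package \cite{Bae11a}; indeed, since these identities appear as a specialization of \cite[Lemma~18]{ABB:symop:2014CQGra..31m5015A}, the cleanest proof is simply to cite that general result. As a sanity check I would verify the flat, vacuum limit, where all curvature spinors vanish and the identities reduce to purely combinatorial operator identities whose coefficients can be derived by a short independent count, and then separately test the coefficient of each curvature spinor on a concrete symmetric $\varphi_{AB}$.
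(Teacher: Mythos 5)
Your proposal is correct and matches the paper's own treatment: the paper gives no computation for this lemma either, but simply cites Lemma~18 of the companion paper on second order symmetry operators \cite{ABB:symop:2014CQGra..31m5015A}, which is exactly the route you identify as the cleanest. Your sketch of the direct verification --- decomposing $\nabla_{AA'}\nabla_{BB'}$ into its totally symmetric part plus the two box operators, applying the spinor Ricci identities, and letting the contractions and symmetrizations in each composition select which of $\Psi$, $\Phi$, $\Lambda$ survives --- is the standard argument underlying that cited result and correctly accounts for the structure of all four right-hand sides.
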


Directly from the Killing spinor equation and the commutators \eqref{eq:DivCurlDagger} and \eqref{eq:DivTwistCurlCurlDagger} we get
\begin{subequations}\label{eq:Twistxi}
\begin{align}
(\sDiv_{1,1}\xi)=0,\\
(\sTwist_{1,1} \xi)_{ABA'B'}={}&-3 \Phi_{(A}{}^{C}{}_{|A'B'|}\kappa_{B)C}.
\end{align}
\end{subequations}
Hence, if the aligned matter condition is satisfied, $\xi^{AA'}$ is a Killing vector. 

Given a conformal Killing vector $\xi^{AA'}$, we define a conformally weighted Lie derivative acting on a symmetric valence $(2s,0)$ spinor field by \cite[Definition 17]{ABB:symop:2014CQGra..31m5015A}
\begin{align}
\hat{\mathcal{L}}_{\xi}\varphi_{A_1\dots A_{2s}}={}&\xi^{BB'} \nabla_{BB'}\varphi_{A_1\dots A_{2s}}+s \varphi_{B(A_2\dots A_{2s}} \nabla_{A_1)B'}\xi^{BB'} \nonumber \\
& + \tfrac{1-s}{4} \varphi_{A_1\dots A_{2s}} \nabla^{CC'}\xi_{CC'}. \label{eq:Liespin-def} 
\end{align}

We shall now prove an auxiliary result on the
derivatives of $\eta_{AA'}$, which will allow us to prove our main
result.
\begin{lemma} \label{lem:etaeqs} 
Let $\kappa_{AB}\in\mathcal{S}_{2,0}$ satisfy the Killing spinor
equation \eqref{eq:KillingSpinordef} and the aligned matter condition
\eqref{eq:alignedmatter}, and let $\xi_{AA'}$ be given by
\eqref{eq:xidef}. If $\phi_{AB}\in\mathcal{S}_{2,0}$ satisfies
the Maxwell equation \eqref{eq:Maxwell} and $\eta_{AA'}$ is given by
\eqref{eq:etadef}, then 
\begin{subequations} \label{eq:etafacts} 
\begin{align}
(\sDiv_{1,1} \eta)={}&0,\label{diveta1}\\
(\sCurl_{1,1} \eta)_{AB}={}&\tfrac{2}{3} (\hat{\mathcal{L}}_{\xi}\phi)_{AB},\label{curleta1b2}\\
(\sCurlDagger_{1,1} \eta)_{A'B'}={}&0,  \label{curleta2}\\
\eta_{AA'} \xi^{AA'}={}&\kappa^{AB} (\hat{\mathcal{L}}_{\xi}\phi)_{AB}. \label{eq:etaLphi} 
\end{align}
\end{subequations}
\end{lemma}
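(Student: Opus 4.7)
The plan is to reduce all four identities to direct applications of the Killing spinor equation $(\sTwist_{2,0}\kappa)=0$, the Maxwell equation $(\sCurlDagger_{2,0}\phi)=0$, the aligned matter condition \eqref{eq:alignedmatter}, and the commutator identities of Lemma~\ref{lemma:commutators}. First, I would rewrite $\eta_{AA'}=(\sCurlDagger_{2,0}\Theta)_{AA'}$, so that equation \eqref{diveta1} becomes the statement $\sDiv_{1,1}\sCurlDagger_{2,0}\Theta=0$, which is an immediate instance of \eqref{eq:DivCurlDagger} applied to $\Theta\in\mathcal{S}_{2,0}$. No hypothesis on $\kappa$, $\phi$ beyond symmetry of $\Theta$ is needed for this one.

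For \eqref{curleta1b2} and \eqref{curleta2}, I would expand $\Theta_{AB}=-2\kappa_{(A}{}^{C}\phi_{B)C}$ and compute $\nabla_{DD'}\Theta_{AB}$ via the Leibniz rule. The Killing spinor equation, together with the decomposition lemma applied with $(k,l)=(2,0)$, gives
\begin{equation*}
\nabla_{DD'}\kappa_{AB} = -\tfrac{2}{3}\epsilon_{D(A}\xi_{B)D'},
\end{equation*}
while Maxwell and the same decomposition give $\nabla_{DD'}\phi_{AB}=(\sTwist_{2,0}\phi)_{DABD'}$. Plugging these into $\eta_{AA'}=\nabla^{B}{}_{A'}\Theta_{AB}$, and then applying $\sCurl_{1,1}$, one obtains a quadratic combination in $\kappa,\phi,\xi,\nabla\xi$ that I expect to match the right-hand side of \eqref{eq:Liespin-def} with $s=1$ and $\varphi=\phi$, once one uses that $\xi$ is Killing (so $\sTwist_{1,1}\xi=0$ and $\sDiv_{1,1}\xi=0$, by \eqref{eq:Twistxi} and the aligned matter hypothesis). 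The constant $\tfrac{2}{3}$ should drop out from the single factor of $-\tfrac{2}{3}$ in the Killing spinor identity above, combined with the symmetrizations.

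For \eqref{curleta2} the strategy is similar but the second derivative $\sCurlDagger_{1,1}\sCurlDagger_{2,0}\Theta$ cannot be collapsed by an identity of the type \eqref{eq:DivCurlDagger}; instead, the anticommutator of two $\nabla^{B(A'}\nabla^{|C|B')}$ acting on the symmetric $\Theta_{BC}$ isolates the $\bar{\Box}_{A'B'}$ part of the spinor curvature commutator. After reinserting $\Theta=-2\kappa_{(A}{}^{C}\phi_{B)C}$ and applying Maxwell, Killing spinor, and the Bianchi-type identity $\sCurl\sTwist$ from \eqref{eq:CurlTwist}, every remaining term should organize itself into a multiple of the aligned matter contraction $\Phi_{(A}{}^{C}{}_{|A'B'|}\kappa_{B)C}$, which vanishes by \eqref{eq:alignedmatter}. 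This is the step I expect to be the main obstacle: keeping track of all curvature terms and verifying that they collect precisely into the aligned matter expression, with no leftover $\Psi_{ABCD}$ or $\Lambda$ contributions — in practice this is the kind of bookkeeping that the \emph{SymManipulator} package handles automatically.

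Finally, for \eqref{eq:etaLphi} I would contract $\eta^{AA'}\xi_{AA'}$, substitute $\xi_{AA'}=\nabla^{B}{}_{A'}\kappa_{AB}$, and use Leibniz together with the already-computed form of $\nabla\Theta$. Alternatively, one can derive \eqref{eq:etaLphi} from \eqref{curleta1b2}: the contraction $\kappa^{AB}(\sCurl_{1,1}\eta)_{AB}=\kappa^{AB}\nabla_{A}{}^{B'}\eta_{BB'}$ can be rewritten, via Leibniz, as $\tfrac{2}{3}\eta^{AA'}\xi_{AA'}$ after using the Killing spinor equation to exchange $\nabla\kappa$ for $\xi$ and Maxwell to discard the $\sCurlDagger_{2,0}\phi$-type trace. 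Comparing with $\kappa^{AB}(\sCurl_{1,1}\eta)_{AB}=\tfrac{2}{3}\kappa^{AB}(\hat{\mathcal{L}}_{\xi}\phi)_{AB}$ then yields \eqref{eq:etaLphi}.
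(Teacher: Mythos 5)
Your overall architecture (Leibniz rule, irreducible decomposition of $\nabla\kappa$ and $\nabla\phi$, the commutators of Lemma~\ref{lemma:commutators}) is the same as the paper's, and your treatment of \eqref{diveta1} is exactly the paper's: it is the commutator \eqref{eq:DivCurlDagger} applied to $\Theta_{AB}$ and needs nothing else. For the remaining identities, however, there are concrete gaps. For \eqref{curleta1b2} the ingredients you list ($\sTwist_{1,1}\xi=0$, $\sDiv_{1,1}\xi=0$, Killing spinor, Maxwell) are not sufficient: after applying $\sCurl_{1,1}$ to $\eta=(\sCurlDagger_{2,0}\Theta)$ you are left with genuine second derivatives of $\phi$, namely $\kappa^{CD}(\sCurl_{3,1}\sTwist_{2,0}\phi)_{ABCD}$ and $\kappa_{(A}{}^{C}(\sDiv_{3,1}\sTwist_{2,0}\phi)_{B)C}$, and the commutators \eqref{eq:CurlTwist}, \eqref{eq:DivTwistCurlCurlDagger} convert these into $\Psi_{ABCD}\kappa\phi$ and $\Lambda\kappa\phi$ terms. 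These cancel only via the integrability condition $\Psi_{(ABC}{}^{F}\kappa_{D)F}=0$ of the Killing spinor equation and the resulting identity \eqref{eq:PsikappaTr1}; this is exactly the step that turns the naive coefficient $\tfrac43$ in front of $\phi_{(A}{}^{C}(\sCurl_{1,1}\xi)_{B)C}$ into the $\tfrac23$ of the Lie derivative, so the constant does not ``drop out'' of the Killing spinor identity alone. For \eqref{curleta2}, the composition $\sCurlDagger_{1,1}\sCurlDagger_{2,0}$ on $\mathcal{S}_{2,0}$ is \emph{not} a zeroth--order curvature (box) operator --- its principal symbol does not vanish --- so the computation does not reduce to isolating a curvature commutator. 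What actually happens is that the surviving second--order piece $\kappa^{AB}(\sCurlDagger_{3,1}\sTwist_{2,0}\phi)_{ABA'B'}$ is converted by \eqref{eq:CurlDaggerTwist} into $\tfrac23\kappa^{AB}(\sTwist_{1,1}\sCurlDagger_{2,0}\phi)_{ABA'B'}$, which vanishes by Maxwell, plus $-2\Phi_{BCA'B'}\kappa^{AB}\phi_{A}{}^{C}$; even this last term needs a small argument you omit, since \eqref{eq:alignedmatter} only kills the symmetrized part of $\Phi\kappa$ and one must check that the leftover trace part dies against the symmetric $\phi$. Deferring this bookkeeping to SymManipulator is reasonable in practice, but it is the actual content of the proof.

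The derivation you propose for \eqref{eq:etaLphi} is circular. Writing $\kappa^{AB}(\sCurl_{1,1}\eta)_{AB}=\kappa^{AB}\nabla_{A}{}^{B'}\eta_{BB'}$ and using Leibniz to move the derivative onto $\kappa$ produces $\pm\,\xi^{BB'}\eta_{BB'}$ \emph{plus} the full divergence $\nabla_{AB'}\bigl(\kappa^{AB}\eta_{B}{}^{B'}\bigr)$; this divergence is not a ``$\sCurlDagger_{2,0}\phi$--type trace'' removable by Maxwell, and evaluating it is essentially equivalent to the identity you are trying to prove. The paper instead derives the pointwise algebraic relation \eqref{eq:kappatwsitphitoeta}, $\kappa^{BC}(\sTwist_{2,0}\phi)_{ABCA'}=\eta_{AA'}+\tfrac43\xi^{B}{}_{A'}\phi_{AB}$, directly from the definition of $\eta_{AA'}$, contracts the Lie--derivative expansion \eqref{eq:Liexiphieq1} with $\kappa^{AB}$, and again uses \eqref{eq:curlxi} and \eqref{eq:intcondkappa} to cancel the resulting $\kappa\phi\,(\sCurl_{1,1}\xi)$ and $\xi\xi\phi$ contributions. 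You would need to supply an argument of this kind; no integration by parts occurs anywhere in the lemma.
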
 
\begin{proof}
Using the definition of the Lie derivative, the Maxwell equation and that $\xi^{AA'}$ is a Killing vector we get
\begin{align}\label{eq:Liexiphieq1}
(\hat{\mathcal{L}}_{\xi}\phi)_{AB}={}&\phi_{(A}{}^{C}(\sCurl_{1,1} \xi)_{B)C}
 + \xi^{CA'} (\sTwist_{2,0} \phi)_{ABCA'}.
\end{align}
The equation \eqref{diveta1} follows directly from the commutator relation \eqref{eq:DivCurlDagger}. 
Also using the commutators \eqref{eq:DivTwistCurlCurlDagger}, \eqref{eq:CurlTwist} and the Killing spinor equation, we get
\begin{align}
(\sCurl_{1,1} \xi)_{AB}={}&(\sCurl_{1,1} \sCurlDagger_{2,0} \kappa)_{AB}
=-6 \Lambda \kappa_{AB}
 + \tfrac{3}{2} \Psi_{ABCD} \kappa^{CD},\label{eq:curlxi}\\
0={}&\tfrac{1}{2} (\sCurl_{3,1} \sTwist_{2,0} \kappa)_{ABCD}
=\Psi_{(ABC}{}^{F}\kappa_{D)F}.\label{eq:intcondkappa}
\end{align}
Performing an irreducible decomposition of the contraction $\Psi_{ABCF} \kappa_{D}{}^{F}$, and using \eqref{eq:curlxi} and \eqref{eq:intcondkappa} we get
\begin{align}\label{eq:PsikappaTr1}
\Psi_{ABCF} \kappa_{D}{}^{F}={}&3 \Lambda \epsilon_{(A|D|}\kappa_{BC)}
 + \tfrac{1}{2} \epsilon_{(A|D|}(\sCurl_{1,1} \xi)_{BC)}.
\end{align}
By using the definition of $\Theta_{AB}$, the Leibniz rule, applying irreducible decompositions, and making use of the Killing spinor equation, the fact that $\xi_{AA'}$ is Killing, and the Maxwell equation, we find 
\begin{align}
(\sCurl_{1,1} \eta)_{AB}={}&(\sCurl_{1,1} \sCurlDagger_{2,0} \Theta)_{AB}\nonumber\\
={}&\kappa^{CD} (\sCurl_{3,1} \sTwist_{2,0} \phi)_{ABCD}
 + \tfrac{1}{2} \kappa_{(A}{}^{C}(\sDiv_{3,1} \sTwist_{2,0} \phi)_{B)C}
  \nonumber \\
& + \tfrac{4}{3} \phi_{(A}{}^{C}(\sCurl_{1,1} \xi)_{B)C} + \tfrac{2}{3} \xi^{CA'} (\sTwist_{2,0} \phi)_{ABCA'} \nonumber .
\end{align} 
Applying the commutator relations \eqref{eq:DivTwistCurlCurlDagger} and \eqref{eq:CurlTwist} and making use of \eqref{eq:PsikappaTr1} now gives 
\begin{align*}  
(\sCurl_{1,1} \eta)_{AB} ={}&\tfrac{2}{3} \phi_{(A}{}^{C}(\sCurl_{1,1} \xi)_{B)C}
 + \tfrac{2}{3} \xi^{CA'} (\sTwist_{2,0} \phi)_{ABCA'}\nonumber\\
={}&\tfrac{2}{3} (\hat{\mathcal{L}}_{\xi}\phi)_{AB} , \nonumber
\end{align*}
where \eqref{eq:Liexiphieq1} was used in the last step. 

Proceeding in a fashion similar to the above, using the definitions of $\eta_{AA'}$ and $\Theta_{AB}$, the Leibniz rule, applying irreducible decompositions, and making use of the Killing spinor equation, the fact that $\xi_{AA'}$ is Killing, and the Maxwell equation, we find
\begin{align*}
(\sCurlDagger_{1,1} \eta)_{A'B'}={}&\kappa^{AB} (\sCurlDagger_{3,1} \sTwist_{2,0} \phi)_{ABA'B'} .
\end{align*}
The commutator relation \eqref{eq:CurlDaggerTwist} then gives
\begin{align*}
(\sCurlDagger_{1,1} \eta)_{A'B'}={}&-2 \Phi_{BCA'B'} \kappa^{AB} \phi_{A}{}^{C},
\end{align*}
and the aligned matter condition gives \eqref{curleta2}.

Finally, expanding the definition of $\eta_{AA'}$, and using the Killing spinor equation and the Maxwell equation yields
\begin{align}
\kappa^{BC} (\sTwist_{2,0} \phi)_{ABCA'}={}&\eta_{AA'}
 + \tfrac{4}{3} \xi^{B}{}_{A'} \phi_{AB}.\label{eq:kappatwsitphitoeta} 
\end{align}
Contracting \eqref{eq:Liexiphieq1} with $\kappa_{AB}$ and using \eqref{eq:kappatwsitphitoeta}, \eqref{eq:curlxi}, and \eqref{eq:intcondkappa} gives \eqref{eq:etaLphi}.
\end{proof}

The proof of the main theorem is now a matter of straightforward verification.  
\begin{proof}[Proof of Theorem~\ref{thm:MainResultTensorVersion}]
From the Leibniz rule, we first find
\begin{align*}
\nabla^{BB'}V_{ABA'B'}={}&\tfrac{1}{2} \bar{\eta}_{A'B} \nabla^{BB'}\eta_{AB'}
 + \tfrac{1}{2} \bar{\eta}_{B'A} \nabla^{BB'}\eta_{BA'} \\
& + \tfrac{1}{2} \eta_{AB'} \nabla^{BB'}\bar{\eta}_{A'B} 
 + \tfrac{1}{2} \eta_{BA'} \nabla^{BB'}\bar{\eta}_{B'A} \\
&  + \tfrac{1}{3} \Theta_{AB} \nabla^{BB'}(\hat{\mathcal{L}}_{\bar{\xi}}\bar{\phi})_{A'B'}
 + \tfrac{1}{3} \overline{\Theta}_{A'B'} \nabla^{BB'}(\hat{\mathcal{L}}_{\xi}\phi)_{AB}\\
&
 + \tfrac{1}{3} \nabla^{BB'}\Theta_{AB} (\hat{\mathcal{L}}_{\bar{\xi}}\bar{\phi})_{A'B'}
  + \tfrac{1}{3} \nabla^{BB'}\overline{\Theta}_{A'B'}
  (\hat{\mathcal{L}}_{\xi}\phi)_{AB}.
\end{align*}
This can be simplified by first observing that
$\hat{\mathcal{L}}_{\xi}$ is a symmetry operator taking
solutions of the Maxwell equation to solutions of the Maxwell
equation, so $(\sCurlDagger_{2,0} \hat{\mathcal{L}}_{\xi}\phi)_{AB}=0$
and similarly for the complex conjugate. It can be further simplified
by substituting the definition $\nabla^B{}_{A'}\Theta_{AB}=\eta_{AA'}$, cf. \eqref{eq:etadef}, 
to eliminate the derivative of $\Theta_{AB}$ terms. This yields
\begin{align*}
\nabla^{BB'}V_{ABA'B'}
={}&- \tfrac{1}{2} \bar{\eta}_{A'}{}^{B} (\sCurl_{1,1} \eta)_{AB}
 -  \tfrac{1}{2} \eta^{B}{}_{A'} (\sCurl_{1,1} \bar{\eta})_{AB}
 -  \tfrac{1}{2} \bar{\eta}^{B'}{}_{A} (\sCurlDagger_{1,1} \eta)_{A'B'}\nonumber\\
& -  \tfrac{1}{2} \eta_{A}{}^{B'} (\sCurlDagger_{1,1} \bar{\eta})_{A'B'}
 + \tfrac{1}{2} \bar{\eta}_{A'A} (\sDiv_{1,1} \eta)
 + \tfrac{1}{2} \eta_{AA'} (\sDiv_{1,1} \bar{\eta}) \nonumber\\
&+ \tfrac{1}{3} \eta_{A}{}^{B'} (\hat{\mathcal{L}}_{\bar\xi}\bar{\phi})_{A'B'}
  + \tfrac{1}{3} \bar{\eta}_{A'}{}^{B} (\hat{\mathcal{L}}_{\xi}\phi)_{AB}.
\end{align*}
The terms involving $(\sCurlDagger_{1,1}\eta)_{A'B'}$ and
$(\sCurl_{1,1}\bar{\eta})_{AB}$ are zero by equation \eqref{curleta2}. Those
involving $(\sDiv_{1,1}\eta)$ and $(\sDiv_{1,1}\bar\eta)$ are zero by
equation \eqref{diveta1}. Finally by equation \eqref{curleta1b2}, the terms
involving $(\sCurl_{1,1}\eta)_{AB}$ and $(\sCurlDagger_{1,1}\bar{\eta})_{A'B'}$ cancel
with those involving $(\hat{\mathcal{L}}_{\xi}\phi)_{AB}$ and
$(\hat{\mathcal{L}}_{\bar\xi}\bar{\phi})_{A'B'}$ respectively. This completes the
result. 
\end{proof}

\section{Further remarks on the Kerr spacetime} \label{sec:kerr-remarks}
The stationary, asymptotically flat, vacuum Kerr spacetimes, and more generally the electro-vacuum Kerr-Newman spacetimes, have algebraic type $\{2, 2\}$, i.e. the Weyl spinor $\Psi_{ABCD}$ has two distinct, repeated,  principal spinors $o_A, \iota_A$ which are unique up to a rescaling. The dyad $o_A, \iota_A$ is normalized by $o_A \iota^A = 1$. 
For the following discussion, recall that given a spin dyad $o_A, \iota_A$, one defines for a symmetric spinor $\varpi_{A_1 \cdots A_k}$  scalars $\varpi_i$ by contracting $i$ times with $\iota^A$ and $k-i$ times with $o^A$.  
This yields Weyl scalars $\Psi_i$, $i=0, \cdots 4$ and Maxwell scalars $\phi_i$, $i=0,1,2$. In a spacetime of type $\{2,2\}$ with principal dyad $o_A, \iota_A$, it holds that $\Psi_{ABCD} = 6 \Psi_2 o_{(A} o_B \iota_C \iota_{D)}$, and in this case it follows from \eqref{eq:intcondkappa} that any valence $(2,0)$ Killing spinor must be of the form 
\begin{equation}
\kappa_{AB} = \zeta o_{(A} \iota_{B)},\label{eq:kappadyad}
\end{equation}
for some scalar $\zeta$. 

If $(t,r,\theta,\phi)$
are Boyer-Lindquist coordinates,  then
the Coulomb field, i.e. the unique static, regular Maxwell test field, on the Kerr-Newman spacetime takes the form 
\begin{equation*}
\phi_{AB} = 
\frac{1}{(r-ia\cos\theta)^2} o_{(A} \iota_{B)}
\end{equation*} 
up to a rescaling by a constant. In particular the extreme components $\phi_0, \phi_2$ are zero. The background Maxwell field in the electro-vacuum Kerr-Newman spacetime is a constant multiple of 
this Coulomb field. 

The Killing spinor $\kappa_{AB}$ is 
\begin{equation}\label{eq:kappaBL}
\kappa_{AB}=\tfrac{2}{3}(r-ia\cos\theta)o_{(A}\iota_{B)}
\end{equation} 
which is therefore proportional to the background Maxwell field in the Kerr-Newman spacetime. Hence, by the Einstein equation, $\Phi_{ABA'B'}$ is proportional to $\kappa_{AB} \bar{\kappa}_{A'B'}$. It follows that the aligned matter condition holds in the Kerr-Newman spacetime.

The normalisation in equation \eqref{eq:kappaBL} is chosen 
so that $\xi^a=(\partial_t)^a$, where $\xi_a$ is given by \eqref{eq:xidef}. In particular $\xi_a$ is real, which exhibits the fact that the Kerr-Newman family admits a Killing-Yano tensor, as remarked above. 
In particular, we see that the tensor $V_{ab}$ given by \eqref{eq:Vdef} is conserved. More generally, any vacuum type $\{2,2\}$ spacetime admits a Killing spinor of valence $(2,0)$, of the form \eqref{eq:kappadyad} 
with $\zeta$ proportional to $\Psi_2^{-1/3}$. This shows that Theorem \ref{thm:MainResultTensorVersion} applies in the class of vacuum type $\{2,2\}$ metrics. 

\subsection{The Teukolsky equations and $V_{ab}$} 
The Maxwell equations on a Kerr black hole imply the $s=1$ 
Teukolsky equations for the extreme
scalars, $\phi_{0}$ and $\phi_2$. This system has many properties in common with the $s=2$
Teukolsky equations which arise from linearising the
Einstein equations.
Despite the fact that the Teukolsky equations have been known
for more than 40 years, and have been the subject of much study, no boundedness or decay estimates are known for the $s\ne 0$ Teukolsky equations, other than the mode stability result of Whiting \cite{whiting:1989}.

In terms of the Maxwell scalars $\phi_i$, the Newman-Penrose scalars for $\Theta_{AB}$ satisfy
\begin{align*}
\Theta_0&=-2\kappa_1\phi_0,&
\Theta_1&=0,&
\Theta_2&=2\kappa_1\phi_2 .
\end{align*}
Thus, only the extreme components of $\phi_{AB}$ appear in $\Theta_{AB}$, and
hence in $\eta_{AA'}$. Equation \eqref{eq:etaLphi} can be used to express 
$(\hat{\mathcal{L}}_{\xi}\phi)_{AB}$ in terms of $\eta_{AA'}$ and $(\hat{\mathcal{L}}_{\xi}\Theta)_{AB}$, from which it follows
that $V_{ab}$ can be written solely in terms of the extreme
components of $\phi_{AB}$. This has two important consequences. Firstly, in the Kerr-Newman
spacetime the extreme components of the Coulomb solutions vanish, 
and hence the conserved tensor 
$V_{ab}$ naturally excludes
non-radiating solutions of the Maxwell equation. Secondly, since it is defined in terms of the extreme Maxwell scalars alone, $V_{ab}$ can
be thought of as an ``energy-momentum tensor'' for the $s=1$ combined Teukolsky/Teukolsky-Starobinsky system, which corresponds to equations \eqref{curleta1b2}-\eqref{curleta2}, see \cite{ABB:jubilee} for more details.

\subsection*{Acknowledgements} Part of the work on this paper was carried out during visits to the Erwin Schr\"odinger Institute, Vienna, and the Mathematical Sciences Research Institute, Berkeley. We are grateful to these institutions for hospitality and support. The authors thank Steffen Aksteiner for many enlightening discussions. 

\newcommand{\arxivref}[1]{\href{http://www.arxiv.org/abs/#1}{{arXiv.org:#1}}}
\newcommand{\mnras}{Monthly Notices of the Royal Astronomical Society}
\newcommand{\prd}{Phys. Rev. D}

\end{document}